\def\BibTeX{{\rm B\kern-.05em{\sc i\kern-.025em b}\kern-.08em
    T\kern-.1667em\lower.7ex\hbox{E}\kern-.125emX}}
\newenvironment{interface}[1][!htb]{%
    \renewcommand{\ALG@name}{Interface}
   \begin{algorithm}[#1]%
  }{\end{algorithm}}
\newtheorem{lemma}{Lemma}
\newtheorem{definition}{Definition}
\algnewcommand\algorithmicuponmsg{\textbf{Upon receiving message}}
\algnewcommand\Uponmsg{\item[\algorithmicuponmsg]}
\algnewcommand\algorithmicuponevent{\textbf{Upon event}}
\algnewcommand\Uponevent{\item[\algorithmicuponevent]}
\algnewcommand\algorithmicupon{\textbf{Upon}}
\algnewcommand\Upon{\item[\algorithmicupon]}
\algnewcommand\algorithmicsendmsg{\textbf{send message}}
\algnewcommand\sendmsg{\State\algorithmicsendmsg}
\algrenewcommand\algorithmicfunction{\textbf{function}}
\algnewcommand\function{\item[\algorithmicfunction]}
\algnewcommand\algorithmicstate{\textbf{state}}
\algnewcommand\state{\item[\algorithmicstate]}
\algnewcommand\algorithmicemitevent{\textbf{Emit event }}
\algnewcommand\Emitevent{\State\algorithmicemitevent}
\algnewcommand\algorithmicbroadcast{\textbf{broadcast}}
\algnewcommand\broadcast{\State\algorithmicbroadcast}
\algnewcommand\algorithmicstruct{\textbf{struct}}
\algnewcommand\struct{\item[\algorithmicstruct]}
\begin{document}

\title{Defending against the nothing-at-stake problem in multi-threaded blockchains\\}

\author{\IEEEauthorblockN{1\textsuperscript{st} Léonard Lys}
\IEEEauthorblockA{
\textit{Massa Labs}\\
Paris, France\\
ll@massa.net\\
0000-0002-7080-9409}
\and
\IEEEauthorblockN{2\textsuperscript{nd} Sébastien Forestier}
\IEEEauthorblockA{
\textit{Massa Labs}\\
Paris, France\\
sf@massa.net\\
0000-0002-2702-8869
}
\and
\IEEEauthorblockN{3\textsuperscript{rd} Damir Vodenicarevic}
\IEEEauthorblockA{
\textit{Massa Labs}\\
Paris, France \\
dv@massa.net \\
0000-0001-9843-9043
}
\and
\IEEEauthorblockN{4\textsuperscript{th} Adrien Laversanne-Finot}
\IEEEauthorblockA{
\textit{Massa Labs}\\
Paris, France\\
alf@massa.net}
}

\maketitle

\begin{abstract}
In blockchain systems, the scarcity of a resource is used as a Sybil protection mechanism. In Proof-of-Work blockchains, that resource is computing power. In the event of a fork, the scarcity of this resource theoretically prevents miners from producing blocks on both branches of a fork. In Proof-of-Stake blockchains, because that resource is a token stake, the computational cost of creating a block is negligible. In the event of a fork, and if no specific measures have been taken, rational block producers should extend both branches of the fork. In blockchains with sequential block production, a punishment mechanism known as slashing is often cited as a protection against the nothing-at-stake problem. However, in the context of a blockchain with parallel block production, it seems that slashing is not sufficient against the numerous divergence opportunities. In this paper, we propose a novel protection against the nothing-at-stake problem that takes the most out of BFT and Nakamoto-based consensus. By combining those approaches, we wish to scale up blockchains by allowing parallel block production without reconciliation.
\end{abstract}

\begin{IEEEkeywords}
Blockchain, nothing-at-stake, Byzantine fault tolerance, parallel block production
\end{IEEEkeywords}

\section{Introduction}
Currently, the most popular blockchain systems have a very limited throughput in terms of transactions per second. Bitcoin and Ethereum, the two biggest chains in terms of market capitalization achieve less than 20 transactions per second. Considering that payment systems such as Visa achieve 1700 transactions per second, one could argue that blockchains are facing a scalability problem.

In sequential blockchains such as Bitcoin or Ethereum, block producers wait for the previous block to be delivered before producing their own block. Because of this sequential block production, the throughput of sequential blockchains is limited by the block propagation times. In order to solve this scalability issue, many projects such as \cite{yu2020ohie} or \cite{forestier2018blockclique} have adopted a parallel architecture. The idea is to have several parallel instances of a sequential blockchain and a reconciliation mechanism that guarantees consistency among the parallel chains. This architecture allows for parallel block production and thus solves the scalability issues of sequential blockchains.

In Proof-of-Stake blockchains, the computational cost of creating a block is negligible, which is different from Proof-of-Work blockchains. When a node is selected to produce a block, it can thus create multiple blocks with different hashes for the same slot, while only one can be taken into account by other nodes in that slot. Similarly, when a fork occurs, the block producer can technically continue to produce blocks on all branches of the fork. This problem is commonly known in the literature as the nothing-at-stake problem. Moreover, if no particular protection is taken, an attacker can bloat the network by creating thousands or millions of valid blocks in all of its slots. This attack could make it harder for honest nodes to get to a consensus on the executed blocks, even more, if the attacker has a large proportion of the block production power, say $\beta = 1/3$. We refer to this type of flooding attack as a Double- or multi-staking attack.

Sequential Proof-of-Stake with Nakamoto-style consensus blockchains such as Tezos~\cite{goodman2014tezos} seems to have no particular protection against this attack other than an incentive mechanism with a double-staking punishment and a small Proof-of-Work challenge.

In sequential blockchains, simply waiting for the next block producer to choose among the alternative chains might be sufficient to avoid consequences on liveness. However, it may not be for the parallel chains, due to the numerous parallel opportunities for the attacker to create incompatible blocks. For instance, if we consider a parallel blockchain with $T$ threads and given a hypothesis of a block propagation time of $\delta$, for each block creation opportunity for the attacker, the next $T$ blocks will have different parents. As a result, the number of incompatible blocks grows exponentially for each block creation opportunity. Agreeing on the ``best'' chain is inherently a slow process as block creators need to be aware of all possible parents. As a result, if the attacker has a large proportion of the stake, the number of incompatible blocks will grow faster than the number of final blocks due to honest nodes agreeing on the ``best'' chain.

The multi-staking attack has two consequences:
\begin{itemize}
    \item massively increases the network usage if all blocks, headers, and operations are propagated,
    \item massively increases CPU usage if blocks, headers, and operations are verified (signature verification, etc.).
\end{itemize}
Because this behavior is easily detectable, a denunciation and punishment system can be implemented. We could for example, "slash" the stake of the validator that signs several versions of a block. While this punishment mechanism would prevent rational actors from multi-staking, it is not the case for the modeled attacker that seeks to disturb the protocol functioning for their benefit or even at their own cost. Moreover, the punishment strategy does nothing to fix the damage done. If an attacker broadcasts too many blocks for the other nodes to process, and even though the attacker's roll will eventually be destroyed, the damage has already been done as the honest cannot decide on this block.
Thus it is necessary for the protocol's safety to find a strategy that allows honest participants to decide, even in the presence of such attackers.

In order to defend against the multi-staking attack in parallel blockchains, we propose a protocol that takes the most out of BFT and Nakamoto consensus. The idea is to decouple a BFT-like block production system from a Nakamoto-like finalization system. 

To do so, we propose that for each block slot, a sub-committee of validators called "endorsers" is drawn. Those endorsers will vote on the block proposed by the block producer. A block will be considered valid, if and only if it receives more than a threshold number of endorsements, for example, $2/3$ of the total number of endorsements. Simple computation show (see Appendix~\ref{app:super-majority-size}) that it is possible to find a reasonable set of parameters for which it is extremely unlikely that an attacker is able to meet the threshold of endorsements by himself. For instance, with $E = 96$ endorsements and a threshold of $64$ endorsements, the probability that an attacker having $1/3$ of the stake meets the threshold by himself is around $p = 5.3e-12$ (given one block every 500ms).

\section{Model and definitions}
We consider a distributed system consisting of an arbitrary finite set of processes $\Pi$. Processes exchange messages through peer-to-peer and bi-directional communication channels. The goal of the protocol is to maintain an eventually consistent ledger among the network.

\subsection{System model}
\paragraph{Cryptography}
Processes use digital signatures to identify each other across the network; each process possesses a pair of public and private key that they use to sign or verify each message they send or deliver. We assume an ideal cryptography model where signatures cannot be forged and collisions do not exist. 

\paragraph{Ledger model}
The ledger $\mathcal{L}$ is a key value store that associates each public key with a data store. Processes store a local replica of the ledger. It is assumed that the ledger is updated in rounds. The state of the ledger $\mathcal{L}$ is the result of the sequential execution of operations called transactions. Transactions are batched into limited-size chunks called blocks. Thus each block $b_i$ represents a totally ordered list of operations that modify the state of the ledger. The ledger $\mathcal{L}$ supports two operations; $EXECUTE(\mathcal{L}, b_i)$ which atomically execute the operations contained in $b_i$ to $\mathcal{L}$ and $READ(\mathcal{L}, i)$ which returns the state of the ledger $\mathcal{L}$ after the execution of block $b_i$. The state of the ledger $\mathcal{L}$ after the execution of block $b_i$ is noted $\mathcal{L}_i$.

\paragraph{Arrival model}
The size of the set $\Pi_p \subset \Pi$ that participate in the protocol is not a priori-known, however similarly with \cite{amoussou2018correctness} we consider a subset $V \subset \Pi$ of processes called validators. The size $n$ of the set $V$ is fixed and known before each round. Processes can be promoted to $V$ on a merit parameter. This could be modeled as the stake in a proof-of-stake blockchain system. 

\paragraph{Failure model}
Processes can be correct or faulty. A correct process follows the protocol while a faulty one exhibit Byzantine behavior. There is no bound on the number of Byzantine processes of $\Pi$. However, we assume that at most $f=n/3$ processes of $V$ can exhibit Byzantine failures. We assume here that $f$ is expressed as a portion of the so-called merit parameter. In the case of a proof-of-stake system, $f$ would be expressed in a portion of the total stake. 

\paragraph{Communication network}
It has been shown in  \cite{mills1995improved, elson2002fine} that algorithms such as NTP \cite{mills1985ntp} can keep clock skew within tens of microseconds in a wide area with a negligible cost. Thus we assume a partially synchronous model where clocks are loosely synchronized. To broadcast a message to the network, it is assumed that processes have access to a $broadcast$ primitive. When a correct process sends a message by invoking the $broadcast$ primitive, all correct processes deliver the message within a time $\delta = t_0$.

\paragraph{Execution model}
Time is divided into epochs which are subdivided into slots. At each slot $s$, a process $p_v \in V$ is selected to produce a block, and a subset $\Pi_e \subset V$ is selected to produce endorsements for this block. We assume that processes have access to a deterministic selection function $committee(s)$ that for a given target slot $s$ computes $p_v$ and $\Pi_e$. The selection function $committee$ is based on decided values such that the output of $committee$ for a given slot $s$ should output the same result for every correct process. On receiving a block $b_i$ signed by $p_v$, correct processes append $b_i$ to a locally maintained block graph. For each new block $b_i$ appended to the block graph, correct processes apply a Nakamoto-like consensus rule to decide which block of the graph is to be finalized and thus executed to the ledger $\mathcal{L}$.

\subsection{Blockchain model}
\subsubsection{Threads and block slots}
We define $T$ threads numbered from $\tau=0$ to $\tau=T-1$, each containing consecutive slots that can host blocks, regularly spaced by $t_0$ seconds. The $i$-th block slot ($i\in \mathbb{N}$) in thread $\tau$ is denoted by $s^\tau_i$ and occurs at $i\cdot t_0 + \tau t_0/T$ seconds. The $\tau t_0/T$ time shift ensures that block slots, and therefore network usage, are uniformly spread in time.

\subsubsection{Multithreaded Block DAG structure}
\label{multithreaded}
Blocks are identified by their cryptographic hash: $b_{h}^\tau$ refers to the block with hash $h$ in thread $\tau$.
We define one genesis block with no parents, denoted  $b_0^\tau$, in the first block slot of each thread.
Each non-genesis block references the hashes of $T$ parent blocks, one from each thread.

We define the block graph $G$ as the graph where each vertex represents a block and each directed edge represents a child-parent relation.
A directed edge between two blocks indicates a child-parent relation, i.e., a directed edge from $b^{\tau_1}$ to $b^{\tau_2}$ indicates that $b^{\tau_2}$ is the parent of $b^{\tau_1}$ in thread $\tau_2$.

As a child block includes its parent's hashes, and the hashing procedure of the child block takes into account those hashes, it is impossible to build a cycle in the graph $G$, unless the security of the hashing function is compromised.
$G$ is therefore a directed acyclic graph of parallel blocks (multithreaded block DAG).

\begin{definition}
Let $G$ denote a block DAG structure with all the following properties:
\begin{itemize}
\item one genesis block is present in each of $T$ threads,
\item non-genesis blocks in thread $\tau$ reference one block of each thread as parents, and have a timestamp strictly higher than any of their parents,
\item to ensure the consistency of block references, any ancestor $b_{h_1}^{\tau_1}$ of a block $b_{h_2}^{\tau_2}$ must be the parent of $b_{h_2}^{\tau_2}$ in thread $\tau_1)$ or one of its ancestors,
\item blocks have a size lower than $S_B$ bits.
\end{itemize}
\end{definition}

\subsubsection{Transaction Sharding}
In our protocol, transactions are sharded: they are deterministically divided into groups to be processed in parallel threads.
For instance, if there are $T=32$ threads, the first $5$ bits of an address define the thread in which transactions originating from this address can be included.

\begin{definition}
Given the sets of possible addresses $\mathcal{A}$ and transactions $\mathcal{T}$, a \emph{sharding} function $\mathcal{S}$ uniformly assigns any address $a\in\mathcal{A}$ to a particular thread $\mathcal{S}(a)=\tau\in[0,T-1]$, and any transaction $\mathrm{tx}\in\mathcal{T}$ to the thread assigned to the transaction's emitter address. The transaction $\mathrm{tx}$ can only be included in a block of thread $\mathcal{S}(\mathrm{tx})$, and can only reduce the balance of addresses assigned to this thread.
\end{definition}

Transaction sharding ensures that transactions in a block are compatible with transactions in blocks from other threads as they can't spend the same coins. 
We stress that this restriction only applies to spending, and transactions can send coins to any address, regardless of the thread it is assigned to. Transactions in a thread are regularly taken into account in blocks of other threads through parent links so that no further cross-shard communication is required.

\subsection{Blockchain properties}
The goal of the protocol is to implement an eventually consistent ledger among a distributed network of processes. We use a definition of eventual consistency similar to the literature standard \cite{saito2005optimistic} in a probabilistic decision model. Our definition of chain head and the committed prefix is also derived from the concepts of candidate chain and finalized chain from the formalization of the Ethereum proof-of-stake in \cite{ethereum2022pavloff}. 

\begin{definition}
    Chain head: The graph of blocks that have not been finalized yet. Considering the view of the block graph of an honest process $p_i$, $p_i$’s associated chain head is noted $G_i^{head}$. 
\end{definition}
Note: The view of the head of the honest process $p_i$ can differ from the view of other honest processes.

\begin{definition}
    Committed prefix: The committed prefix is the graph constituted of all the finalized blocks. The committed prefix is always a prefix of any chain head $G_i^{head}$.
\end{definition}

\begin{definition}
    Final block: A block $b_i$ is final for a process $p_i$ when it cannot be revoked with v.h.p. At each round, processes apply a finality condition to decide which blocks are final. The state changes implied by a final block to the ledger are applied by invoking  $EXECUTE(\mathcal{L}, b_i)$.
\end{definition}
Note: Because the finality condition depends on the depth of a block, all predecessors of a final block are final.

\section{Super-majority among sub-committee multi-staking protection}
\subsection{High-level protocol description}
In order to choose a block $b_{p}$ as a thread parent, some child block $b_{c}$ of will have to include more than the threshold number of endorsements endorsing $b_{p}$. If some block receives more than the threshold number of endorsements, the next block producer will be able to form what we call a super-majority certificate. This certificate, which is a list of endorsements plus some metadata on the endorsed block, will be included by the block producer in the child block. If some block $b_p$ does not reach the threshold number of endorsements, it cannot be selected as a parent and will be discarded. If an attacker intents a multi-staking attack, most of the proposed blocks will not reach the threshold number of endorsements and thus, can safely be ignored by the other nodes.
In the canonical case, the scenario is as follows:
\begin{enumerate}
    \item For each slot $s_i$, one block producer and $E$ endorsers are pseudo-randomly drawn according to their proportion of the stake.
    \item The block producer creates a block $b_i$ that includes a super-majority certificate $cert_{i-1}$  endorsing the parent block of $b_i$, and broadcasts it.
    \item On receiving block $b_i$, if it is valid, endorsers sign the content of the block and broadcast the signature. This signature serves as an endorsement for this block.
    \item Node runners propagate the block and the endorsements to their peers.
    \item As a block producer for the following slot $s_{i+1}$, I try to collect more than the threshold number of endorsements to form a super-majority certificate $cert_i$ endorsing $b_i$. If I achieve to do so, I select $b_i$ as a parent for my block $b_{i+1}$ within which I include the super-majority certificate $cert_i$ I just formed.
\end{enumerate}
In this protocol, it is still easy for an attacker to create multiple versions of a block. However, as those blocks are not going to be taken into account by block producers as long as they do not meet the endorsements threshold, nodes can simply ignore them. In the unlikely case where two (or more) blocks were produced and one of them ends up meeting the endorsements threshold, a node that does not possess the endorsed block can simply query peers to get it. As this block meets the endorsement threshold, it means that, statistically, a large portion of honest nodes have the block in their storage and can share it with other peers. Those procedures are detailed in the algorithms below.

\subsection{Protocol specification}
In this section, we are going to specify the protocol implemented by the validator's clients. The specification is articulated as follows:
Algorithm \ref{alg:blockclique} implements the fork choice rule of the protocol. It computes the heaviest, fully-connected sub-graph upon which block producers should build. 
Algorithm \ref{alg:scheduler} is the Scheduler. It is responsible for making the Proof-of-Stake pseudo-random draws that determine for each slot, the set of endorsers as well as the block producer. When the current process public key is drawn, the Scheduler launches a corresponding instance of the Endorser \ref{alg:endorser} or Block producer \ref{alg:block-producer} Algorithm. 
Algorithm \ref{alg:message-handler} is responsible for handling incoming messages. It implements the propagation rules that make the protocol safe against flooding attacks. It checks the validity of the messages and decides whether an incoming block or endorsement should be processed or ignored. 
Interface \ref{alg:utilities-interface} and \ref{alg:data-structures} specify some data structures and utilities employed in the other Algorithms. Full description of the interface \ref{alg:utilities-interface} is available in appendix \ref{alg:utilities-full}. 

\subsubsection{Data structures and communication}
\paragraph{Data structures}
In the Interface \ref{alg:data-structures}, we present the different data structures used in the rest of the specification. 
\begin{interface}[!htb]
\caption{Data structures}
\label{alg:data-structures}
\footnotesize
\begin{algorithmic}[1]
\struct{slot}
    \State $thread$ : Slot's thread 
    \State $period$ : Slot's period
\\
\struct{parent}
    \State $hash$ : Hash of the block
    \State $slot$ : Slot of the block
\\
\struct{Endorsement}
    \State $slot$ : Slot in which the endorsement can be included
    \State $index$ : Endorsement index
    \State $endorsedBlock$ : Hash of the endorsed block
\\
\struct{Certificate}
    \State $slot$ : Slot where the endorsements of the certificate were produced
    \State $endorsements$ : Vector of endorsements
    \State $endorsedBlock$ : Hash of the endorsed block
\\
\struct{Block}
    \State $slot$ : Slot in which the block can be included
    \State $parents$ : Vector of parents indexed by thread
    \State $certificates$ : Vector of super-majority certificates included in this block
    \State $operations$ : Vector of operations
\end{algorithmic}
\end{interface}

\paragraph{Broadcast primitive}
To broadcast a message to the network, processes invoke the $broadcast$ primitive. The $broadcast$ primitive expects two primary fields; a tag that defines the type of message and a payload. The payload itself is subdivided into several fields that depend on the type of message. Thus a broadcast primitive invocation is a form $boradcast(<TAG>, msg)$, where $<TAG>$ defines the type of message among \{BLOCK, ENDORSEMENT, CERTIFICATE, REQUEST-BLOCK\}. 

\subsubsection{Fork choice rule: the blockclique}
\label{subsection:fork-choice}
In sequential blockchains with Nakamoto consensus, when two blocks reference the same parent, we say that they are incompatible. Incompatible blocks generate alternative branches of a fork, and among those branches, only one version can be appended to the committed prefix. In the context of our multi-threaded block DAG, incompatible blocks generate sub-graphs called cliques. Whereas in sequential blockchains with Nakamoto consensus, block producers build upon the longest chain, in our protocol, block producers build upon the best clique called the Blockclique.
Algorithm \ref{alg:blockclique} implements the fork choice rule of the protocol. It is a weighted Nakamoto longest-chain rule extended to our parallel block graph architecture. It computes the heaviest fully connected sub-graph of compatible blocks.  

\paragraph{Compatibility Graph}
In sequential blockchains with Nakamoto-style consensus, if some blocks $b_i$ and $b*_{i}$ reference the same parent, any block $b_{i+1}$ referencing $b_i$ as a parent will be incompatible with $b*_{i}$ and all of its descendants. In the context of our multi-threaded block DAG, the same rule applies, but incompatibilities are inherited to all descendants across threads. In parallel blockchains, blocks are produced in parallel, i.e., some block producer for slot $s_i^{\tau_2}$ does not have to wait to deliver block $b_i^{\tau_1}$ to produce his block $b_i^{\tau_1}$. While this property is desirable for the protocol throughput, it would generate incompatibilities according to sequential blockchain compatibility rules. Thus, the compatibility rules must be extended to allow for parallel block production. 

In order to represent compatibilities, we define the compatibility graph $G_C$ as the graph that links all compatible blocks of $G$ with an undirected edge. In this graph $G_C$, if two vertexes are not connected with an undirected edge, they are incompatible. From this compatibility graph $G_C$, we will be able to compute the heaviest fully connected graph, i.e., the Blockclique, among which block producers should build upon. Because super-majority certificates also account for the weight of a sub-graph, we also append super-majority certificates to the compatibility graph $G_C$. 

The compatibility graph $G_C$ is constructed in two steps, first, all blocks are appended, and then all super-majority certificates.

To construct the graph $G_C$, we sequentially append all blocks $b_i$ of $G$ in topological order. For each new block $b_i$ appended to $G_C$, we sequentially compare $b_i$ to all vertexes $b_j$ of $G_C$, with $j$ in topological order, and link $b_i$ and $b_j$ with an undirected edge if they verify one of the following condition:
\begin{itemize}
    \item Ancestor compatibility
        \begin{equation}
        \label{eq:ancestorCompatibility}
        \begin{cases}
            \text{All parents of $b_i$ are connected with $b_j$ in $G_C$}, \\
            \land directedPath(G, b_i, b_j) 
        \end{cases}       
    \end{equation}
Where the predicate $directedPath(G, b_i, b_j)$ indicates that there is a directed path from $b_i$ to $b_j$ in $G$. This rule reflects the fact that blocks that are ancestors and descendants of each other should be compatible if they reference mutually compatible parents.

    \item Parallel compatibility
        \begin{equation}
        \label{eq:parallelCompatibility}
            \begin{cases}
                \text{All parents of $b_i$ are connected with $b_j$ in $G_C$}, \\
                \land |t(b_i) - t(b_j)| < t_0, \\
                \land threadParent(v_i) \neq  threadParent(v_j)
            \end{cases}       
        \end{equation}
Where $t(b_i)$ is the time of the slot of the block of $b_i$. This rule reflects the fact that if two blocks of different threads were produced in a time frame smaller than the propagation time $t_0$, then they should be compatible given that they have compatible parents and that they do not have the same thread parent.
\end{itemize}

Once all blocks have been appended to the compatibility graph $G_C$, super-majority certificates are appended in topological order. We make a distinction between two types of super-majority certificates; the ones that have been included in a block, and the speculative ones, which can be formed but are not yet included in a block. Certificates that are included in a block inherit the compatibilities of the block they were included in, while speculative certificates inherit the compatibilities of the block that they endorse. 

\paragraph{Fitness}
In the protocol, each block and each super-majority certificate represents one unit of weight called fitness. The fitness of a clique is defined as the number of vertexes it includes. This value reflects the number of times a super-majority certificate or a block has been produced in a clique. See Algorithm \ref{alg:blockclique}.

\paragraph{Best Clique of Compatible Blocks}
Let \texttt{cliques}$(G_C)$ be the set of maximal cliques of compatible vertexes: the set of subsets $C$ of $G_C$ so that every two distinct vertexes of $C$ are adjacent in $G_C$ and the addition of any other vertex from $G_C$ to $C$ breaks this property.
In the remainder of the paper, the term ``clique" refers to a maximal clique of compatible vertexes.
The protocol consensus rule states that the best clique that nodes should extend, is called the \emph{blockclique} and is the clique of maximum total fitness. If two cliques have the same total fitness, the clique with the smallest arbitrary-precision sum of the hashes of the blocks it contains is preferred.

\begin{algorithm*}[!htb]
\caption{Blockclique computation}
\label{alg:blockclique}
\footnotesize
\begin{algorithmic}[1]
\state
\State $G$ : Block graph
\State $G_C$ : compatibility graph
\State $Certificates$ : 2D Vector of super-majority certificates $Certificates[slot][blockHash] \rightarrow Certificate$ \Comment{2D vector because there might be multi-endorsing}

\\
\function{$blockclique()$} \Comment{Return the blockclique}
\State $blockclique \gets$ heaviest fully connected sub-graph $\in G_C$
\State \Return $blockclique$
\\
\function{$maxCliques()$} \Comment{Return the set of max cliques}
\State $cliques \gets$ vector of maximal cliques $\in G_C$
\State \Return $cliques$
\\
\function{$appendToG(b_i)$}
\State $G.addVertex(b_i)$ \Comment{Create a vertex in $G$}
\For{$parent \in b_i.parents$} \Comment{For each parent}
    \State $G.addDirectedEdge(b_i, parent)$ \Comment{Add directed edge in $G$ between block and parent}
\EndFor
\\
\function{$appendToG_C(b_i)$}
\State $G_C.addVertex(b_i)$\Comment{Create a vertex}
\For{$b_j \in G_C$}\Comment{For each block of $G_C$}
    \If{$parentsMutuallyCompatible(b_i, b_j)$}\Comment{Verify that all parents of $b_i$ are compatible with $b_j$}
        \If{$directedPath(G, b_i, b_j)$}\Comment{If they are ancestor compatible}
            \State $G_C.addEdge(b_i, b_j)$\Comment{Create an undirected path in $G_C$}
        \EndIf
        \If{$(|t(b_i) - t(b_j)| < t_0) \land (threadParent(v_i) \neq  threadParent(v_j))$} \Comment{If they are parallel compatible blocks}
            \State $G_C.addEdge(b_i, b_j)$ \Comment{Create an undirected path in $G_C$}
        \EndIf
    \EndIf
\EndFor
\For{$block \in G_C$} \Comment{For each block of $G_C$ in topological order}
    \For{$cert \in b.certificates$} \Comment{For each cert. in the block}
        \For{$neighbor \in G_C.getNeighbors(block)$} \Comment{For each compatible block}
            \State $G_C.addEdge(cert, neighbor)$ \Comment{Create an undirected path in $G_C$}
        \EndFor
    \EndFor
\EndFor
\For{$cert_i \in Certificates$}\Comment{For each certificate}
    \If{$\nexists cert_j \in G_C$ where $cert_j.endorsedBlock = cert_i.endorsedBlock \land cert_j.slot = cert_i.slot$}\Comment{If it is not included in a block nor duplicate}
        \State $G_C.addVertex(cert_i)$\Comment{Add the cert. to $G_C$}
        \For{$neighbor \in G_C.getNeighbors(cert_i.endorsedBlock$}\Comment{For each neighbor of endorsed block}
            \State $G_C.addEdge(cert_i, neighbor)$\Comment{Create an undirected path in $G_C$}
        \EndFor
    \EndIf
\EndFor
\\
\function{$parentsMutuallyCompatible(b_i, b_j)$}
\For{$parent \in b_i.parents$}
    \State \textbf{require} $G_C.getEdge(parent, b_j)$
\EndFor
\State \Return $true$

\end{algorithmic}
\end{algorithm*}

\subsubsection{Primitives and utilities}
Processes have access to a set of validity primitives. Those primitives are invoked by the node runners when processing incoming messages. There is one validity primitive per data structure; block, endorsement, certificate, and denunciation. For all of those data structures, the validity primitive checks if the message is correctly signed and structured. Then a specific logic is implemented for each type of message. Those primitives are presented in Interface \ref{alg:validity-primitives}.

\paragraph{Endorsement and certificate validity primitive}
Besides signature checks, the endorsement validity primitive verifies that the public key that signed the endorsement was indeed drawn as the endorser for the specific slot and PoS-index. For each included endorsement, the certificate validity primitive invokes the endorsement validity primitive. It then proceeds to verify that all included endorsements endorse the same block and that the threshold number of endorsements is reached. Finally, it ensures that the certificate metadata (slot and endorsed block) is consistent with the included endorsements.

\paragraph{Block validity primitive}
The block validity primitive implements the logic specific to block validity. Besides the structure and signatures checks, it verifies that the message's source was indeed drawn as a block producer. It verifies that the block was not produced too early and then proceeds to check the included certificates. For each included certificate, it invokes the certificate validity primitive and verifies that all the included certificates endorse the thread-parent of the block. The block validity primitive ensures that, among the included certificates, at least one was produced during the same period as the thread parent. 
In order to maintain consistency in the compatibility graph, the block validity primitive also checks that the block can be safely appended to the graph; first, it verifies that all referenced parents are already in the graph. Indeed, because the compatibility graph must be computed in topological order, we cannot process a block if one of his ancestors is missing. Thus, if a block references an unknown parent, the block won't be processed until the parent is delivered. The block validity primitive also verifies that all referenced parents are mutually compatible. In other words, it checks that the intersection of all parents belongs to some clique. If at least two of its parent are not in the same clique, it is unnecessary to process the block as it cannot be finalized. Finally, the block validity primitive verifies that the block's grandparents are at least older than its parent. This rule ensures that blocks acknowledge blocks created in other threads.
\\
To put it in simple terms, a block is valid if, (1) the number of included certificates is $\geq 1$, (2) all included certificates are valid, (3) all included certificates endorse the thread parent, (4) among those certificates, at least one was produced during the same period as the thread-parent, (5) the block's parents are older than the block itself, (6) all parents have been delivered (7) all parents connected in $G_C$, and (8), grandparents are older or equally old as the block's parents. 

\begin{algorithm}[!htb]
\caption{Validity primitives}
\label{alg:validity-primitives}
\footnotesize
\begin{algorithmic}[1]
\state
\State $G$ : Block graph
\State $G_C$ : compatibility graph
\State $time$ : Number of milliseconds since the genesis block of thread 0
\State $pubKey$ : This process public key
\State $threshold$ : Threshold number of endorsements to form a super-majority certificate

\\
\function{$isValid(Block : b)$} \Comment{Checks if block is valid}
    \State \textbf{require} $b.slot \geq timeToSlot(time)$ \Comment{Verify that block is not produced too early}
    \State \textbf{require} Valid signature $\land$ Valid PoS draw \Comment{Check sig and committee}
    \State \textbf{require} $parentsMutuallyCompatible(b)$ \Comment{Verify that all parents are mutually compatible}
    \State \textbf{require} $grandParentOlderThanParent(b)$ \Comment{Verify that all parents are more or equally recent than grandparent}
    \State \textbf{require} $certificatesEndorseRightParent(b)$ \Comment{Verify that certificates endorse thread parent and that parent reach the threshold in the right slot}
    \State \Return $true$
\\
\function{$isValid(Endorsement : e)$} \Comment{Checks if endorsement is valid}
    \State \textbf{require} Valid signature $\land$ Valid PoS draw \Comment{Check sig and committee}
    \State \Return $true$
\\
\function{$isValid(Certificate : cert)$}
    \State \textbf{require} $\lvert cert.endorsements \rvert \geq threshold$ \Comment{Verify that cert reached the threshold}
    \State $visitedIndex \gets \emptyset$ \Comment{Index list to check for endorsement reentrancy}
    \For{$e \in cert.endorsements$} \Comment{For each endorsement of the certificate}
        \State \textbf{require} $isValid(e)$ \Comment{Verify that endorsement is valid}
        \State \textbf{require} $e.endorsedBlock = cert.endorsedBlock$ \Comment{{Verify that endorsement endorses the right block}}
        \State \textbf{require} $e.slot = cert.slot$ \Comment{Verify that endorsement was produced at the right slot}
        \State \textbf{require} $e.index \notin visitedIndex$ \Comment{Check not duplicate}
        \State $visitedIndex.append(e.index)$ \Comment{Append index}
    \EndFor
    \State \Return $true$
\end{algorithmic}
\end{algorithm}

\paragraph{General utilities}
Besides the validity primitives, node runners have access to the utilities presented in Interface \ref{alg:utilities-interface}. This interface implements the PoS draws for both the block producer and endorsers set and provides some serializers. A full description of the interface is available in Appendix \ref{alg:utilities-full}.

\begin{interface}[!htb]
\caption{Utilities}
\label{alg:utilities-interface}
\footnotesize
\begin{algorithmic}[1]
\function{$timeToSlot(time)$} \Comment{Transforms time into slot}
    \State \Return $slot$ \Comment{Returns slot}
\\
\function{$buildCert(endorsements)$} \Comment{Tries to build a super-majority certificate out of an endorsement list}
    \State \Return $cert$ \Comment{Return certificate if possible, false otherwise}
\\
\function{$committeeBlockProducer(Slot: s)$} \Comment{PoS draw block producer}
    \State \Return $blockProducer$ \Comment{Returns the block producer public key}
\\
\function{$committeeEndorsers(Slot: s)$} \Comment{PoS draw endorsers given slot and ledger state}
    \State \Return $endorsers$ \Comment{Returns a vector of objects $(pubKey, index)$}
\\
\function{$parentsOlderThanBlock(Block: b)$} \Comment{Verifies that a block's parents are older than the block}
\State \Return $true$

\\
\function{$parentsMutuallyCompatible(Block: b)$} \Comment{Checks if a block's parents are all mutually compatible}
\State \Return $true$  \Comment{Returns true if compatible, false otherwise}

\\
\function{$grandParentOlderThanParent(Block: b)$}\Comment{Verifies that a block\'s grandparent are older than it\'s parent}
\State \Return $true$ \Comment{Returns true if older or equally old, false otherwise}

\\
\function{$certificatesEndorseRightParent(Block: b)$}
\State \Return $true$ \Comment{Returns true if endorses right parent, false otherwise}
\end{algorithmic}
\end{interface}

\subsubsection{Scheduler}
The Scheduler, Algorithm \ref{alg:scheduler} orchestrates the PoS draws to instantiate the Block producer and Endorser instances. At the beginning of every period, i.e., every $t_0/T$, it computes the PoS draws to determine if the current process is block producer, with $t_0$ being the time between two slots of the same thread and $T$ the total number of threads. Thus $t_0/T$ is the time between two consecutive slots of different threads. The scheduler also computes the PoS draws to determine the set of endorsers. It does so at the same frequency but with a $t_0/2$ delay. This delay is the modeled maximum message propagation time. It ensures that the block to be endorsed had enough time to propagate in the network. Note that, if the block is received sooner, the endorser can endorse before that $t_0/2$ delay. The block message will trigger the creation of an endorser instance. 
\begin{algorithm}[!htb]
\caption{Scheduler}
\label{alg:scheduler}
\footnotesize
\begin{algorithmic}[1]
\state
\State $time$ : Number of milliseconds since the genesis block of thread 0
\State $pubKey$ : This process public key
\State $t_0$ : Constant seconds per period (Curr. 16s)
\State $T$ : Constant amount of threads (Curr. 32)
\\
\function $schedulerBlockProducer(Slot : s)$
    \State $blockProducer \gets committeeBlockProducer(s)$ \Comment{PoS draw}
    \If{$pubKey = blockProducer$} \Comment{If current process drawn as block producer}
        \State \textbf{initialize} instance block producer($s$)
    \EndIf
\\
\function $schedulerEndorser(Slot : s)$
    \State $endorsers \gets committeeEndorsers(s)$ \Comment{PoS draw}
    \For{$e \in endorsers$}
        \If{$pubKey = e.pubKey$} \Comment{If current process drawn as endorser}
            \State \textbf{initialize} instance endorser($s, e.index$)
        \EndIf
    \EndFor
\\
\function $\bf main()$
    \State $setInterval(t_0/T, schedulerBlockProducer(timeToSlot(time))$ \Comment{Calls block scheduler every $t_0/T$, i.e. at the beginning of each period}
    \\
    \State $setInterval((t_0/2)+(t_0/T), schedulerEndorser(timeToSlot(time))$ \Comment{Calls endorser scheduler every $t_0/T$ with a $t_0/2$ delay, i.e. at the middle of each period}
\end{algorithmic}
\end{algorithm}

\subsubsection{First-in messages propagation rule}
Algorithm \ref{alg:message-handler} describes the propagation rules of the protocol.
As explained earlier, one aspect of the multi-staking attack is that, because the cost of producing a block is negligible, an attacker could flood the network with block proposals. If every proposed block were to be processed and propagated by every node, this would increase network and CPU usage and thus, potentially lead to a Denial of Service type of attack. This threat also applies to endorsement messages. To prevent such attacks, the protocol will enforce a first-in propagation rule. The idea behind this propagation rule is that, in the canonical case, an honest validator will only broadcast a single version of a message per slot where he was PoS drawn. Any subsequent message for the same slot is sufficient to detect Byzantine behavior. In the non-canonical case where an attacker broadcasts a second version of a block or an endorsement, an honest validator will store the message as proof of Byzantine behavior. It will then propagate the message in order for other nodes to acknowledge the behavior. Subsequent messages concerning the same slot will be ignored. If an honest validator has detected a Byzantine behavior, and if this validator is a block producer in the next slots, he can include in the operation of the block a denunciation containing proof of the Byzantine behavior. If the proof is valid, and if it was not yet included in a block, the Byzantine validator will see its stake slashed. Half of it will be burned and the other half given to the validator that submitted the denunciation as a reward.

Thus the message propagation rule can be loosely defined as follows: For a given block slot $s$ and corresponding Proof-of-Stake drawn block producer, node runners will only store and propagate the first block message they receive, given that this message is correctly signed and structured and given that the content of the block is valid according to the validity primitive.
Similarly, for a given endorser set and corresponding endorsement slot, node runner will only store and propagate the first endorsement message they receive given that this message is correctly signed and structured.
Any subsequent message concerning the same slot will be considered Byzantine behavior and will potentially trigger a denunciation from the recipient. The signed messages will be held as proof of Byzantine behavior. 

While this propagation rule would make the system safe against denial of service types of threat, it does not offer opportunities for honest validators to recover from a fork. Let us assume that the network gets partitioned into $P$ and $P'$, where validators of $P$ have received block $b$ first while members of $P'$ have received $b'$ first. Because of the first-in propagation rule, block $b$ resp. $b'$ have been ignored by node runners of $P$ resp $P'$. In order for the partitions to eventually converge, there must be an exception to the communication rules that allow for some multi-staked blocks to be processed and propagated without opening the door for flooding attacks.

We propose a locally maintained request list that, from the point of view of each node runner, references the hash of the requested blocks. If an incoming block is referenced in the request list, it will be processed, even though its corresponding slot is already full. Node runners will append a block hash to this list in several cases;
\begin{enumerate}
    \item If they receive enough endorsements to form a super-majority certificate for a block that is not in the block store.
    \item If they receive a valid block whose parent is not in the block store. In this case, they will request the said parent block.
    \item If they receive a valid super-majority certificate that endorses a block that is not in the block store.
\end{enumerate}
Behind the choice of those exceptions is still the same idea that the protocol should be safe against flooding attacks. It should be difficult for an attacker to forge multiple versions of a message that would trigger events (1), (2), and (3). Because all of those messages require the threshold number of endorsements to be reached, simple computation presented in \ref{app:super-majority-size} shows that it is very unlikely for an attacker to be able to forge those messages by himself. To forge such messages, the attacker has to obtain some endorsements from honest participants, which limits the number of messages he can forge and thus protects the protocol from flooding attacks. 

Note: For the sake of simplicity and readability, we did not specify how processes should answer block requests. However, the behavior is quite straightforward and can be described as follow, on receiving a block request, and if I have the specified block, I answer with a block message containing the block.

In the event of a successful multi-staking attack or during a fork, if several partitions of the network have a divergent view of the chain, the exceptions described here will allow honest node runners to discover the view of the other partitions. From there, they will be able to evaluate each branch of the fork according to the fork choice rule specified in Section \ref{subsection:fork-choice}.

The message propagation rules are further detailed in Algorithm~\ref{alg:message-handler}.

\begin{algorithm*}[!htb]
\caption{Message handler}
\label{alg:message-handler}
\footnotesize
\begin{algorithmic}[1]
\state
\State $G$: Block graph
\State $G_C$: Compatibility graph
\State $Endorsements$ : 2D vector of endorsements $Endorsements[slot][index]
\rightarrow Endorsement$ \Comment{2D vector because for each block slot there are E endorsement slots}
\State $Certificates$ : 2D Vector of super-majority certificates $Certificates[slot][blockHash] \rightarrow Certificate$ \Comment{2D vector because there might be multi-endorsing}
\State $blockDenunciations$ : Vector of Byzantine blocks
\State $endorsementDenunciations$ : 2D Vector of Byzantine endorsements
\State $pubKey$ : Current process public key
\State $requestList$ : Vector of requested blocks 

\\
\Uponmsg [BLOCK, $b$] \textbf{do}
\If{$G[b.slot] = \emptyset \parallel hash(b) \in requestList$} \Comment{Check that slot is empty i.e., no multistaking, or requested}
    \State \textbf{require} $isValid(b)$ \Comment{Check block validity. See utilities Algo. \ref{alg:utilities-full}}
    \State $missingParent \gets false$ \Comment{Boolean to check for missing parents}
    \For{$parent \in b.parents$} \Comment{For each parent}
        \If{$parent \notin G$}\Comment{If I don\'t have the parent}
            \State $requestList.append(parent)$ \Comment{Append block to req list}
            \broadcast[REQUEST-BLOCK, $parent$] \Comment{Request parent block}
            \State $missingParent \gets true$
        \EndIf
    \EndFor
    \State \textbf{require} $\neg missingParent$ \Comment{Check that no parent was missing}
    \broadcast[Block, $b$] \Comment{No problem detected, broadcast block to peers}
    \State $G.appendToG(b)$ \Comment{Store block}
    \State $G_C.appendToG_C(b)$ \Comment{Update compatibility graph}
    \State $finalizer()$ \Comment{Call the finalizer. See Algo. \ref{alg:finalizer}}
    \State $endorsers \gets committeeBlockProducer(b.slot)$ \Comment{PoS draw}
    \For{$key \in endorsers$ where $key = pubKey$}\Comment{If current process drawn}
        \State \textbf{initialize} instance endorser($b.slot, key.index$) \Comment{Don't wait for $t_0/2$}
    \EndFor
\ElsIf{$blockDenunciations[b.slot] = \emptyset$} \Comment{Only store one proof per slot}
    \State $blockDenunciations.append(b)$
\EndIf

\\
\Uponmsg [ENDORSEMENT, $e$] \textbf{do}
\If{$Endorsements[e.slot][e.index] = \emptyset$} \Comment{Check that no multi-endorsing}
    \State \textbf{require} $isValid(e)$ \Comment{Check validity. See utilities algo \ref{alg:utilities-full}}
    \State $Endorsements[e.slot][e.index] \gets e$ \Comment{Store endorsement}
    \broadcast[ENDORSEMENT, $e$] \Comment{Broadcast endorsement to peers}
    \If{$Certificates[e.slot][e.endorsedBlock] = \emptyset$} \Comment{If I don't have a certificate for this block}
        \State $cert \gets buildCert(Endorsements[e.slot])$ \Comment{Try to build a cert. with the floating endorsements}
        \If{$cert$} \Comment{If cert. can be formed}
            \State $Certificates[e.slot][e.endorsedBlock] \gets cert$ \Comment{Store cert.}
            \If{$e.endorsedBlock \in G$} \Comment{If I have the block}
                \broadcast[CERTIFICATE, $cert$] \Comment{Broadcast to peers}
            \Else \Comment{If I don\'t have the block}
                \State $requestList.append(e.endorsedBlock)$ \Comment{Add to request list}
                \broadcast[REQUEST-BLOCK, $e.endorsedBlock$] \Comment{Request block}
            \EndIf
        \EndIf
    \EndIf
\ElsIf{$endorsementDenunciations[e.slot][e.index] = \emptyset$} \Comment{Only store one proof per slot}
        \State $endorsementDenunciations.append(e)$
        \broadcast[ENDORSEMENT, $e$] \Comment{Broadcast it as proof of Byzantine behavior}
\EndIf
\\
\Uponmsg [CERTIFICATE, $cert$] \textbf{do}
\If{$Certificates[cert.slot][cert.endorsedBlock] = \emptyset$} \Comment{If I don't have a cert. for this block}
    \State \textbf{require} $isValid(cert)$ \Comment{Check cert. validity}
    \State $Certificates[cert.slot][cert.endorsedBlock] \gets cert$ \Comment{Store cert.}
    \broadcast[CERTIFICATE, $cert$] \Comment{Broadcast cert. to peers }
    \If{$cert.endorsedBlock \notin G$} \Comment{If I don't have the corresponding block}
        \State $requestList.append(cert.endorsedBlock)$ \Comment{Add block to request list}
        \broadcast[REQUEST-BLOCK, $cert.endorsedBlock$] \Comment{Request block to peers}
    \EndIf
\EndIf
\end{algorithmic}
\end{algorithm*}

\subsubsection{How endorser chose what block to endorse}
Algorithm \ref{alg:endorser} describes the behavior that endorsers should adopt to choose which block to endorse. To choose the block they endorse, endorsers compute the blockclique based on the current state of their storage. They endorse the last block of the blockclique of the thread within which they were selected to produce an endorsement. The block that will be endorsed is not necessarily the most recent block of the thread. Indeed, if for some reason, the last proposed block for a given thread is not in the blockclique, then the endorser can endorse an older block. There are actually no constraints on what block endorser can endorse, besides the thread constraint. Indeed, if endorsers were limited in their choice, for example, if they had to endorse the most recent block they have for a given thread, then any rational endorser would produce an endorsement for this block, even though this block is not the best parent in their view. Endorsements indicate to block producers what is the best block to build upon. Thus endorser can freely choose what block to endorse.

\subsubsection{How block producers create blocks}
Algorithm \ref{alg:block-producer} describes how block producers create their blocks. To select the parents, block producers compute the blockclique out of their block store, and select each last block of each thread in the blockclique for which they are able to form a super-majority certificate. 
Note: these majority certificates can be formed out of endorsements that are not yet included in a block. Block producers try to form and include as many super-majority certificates as they can. 
In the canonical case where some block $b_i$ was produced during period $i$, and this block received more than the threshold number of endorsements to form a super-majority certificate $cert_i$, then the block producer for the period $i+1$ will only include $cert_i$ in his block $b_{i+1}$ and select $b_i$ as its thread parent. 
But if for some reason no block was proposed for slot $s_i$ or if the proposed block is not in the blockclique, block producers can choose an older block as a parent, and if possible, include more than one super-majority certificate in their block.

\subsubsection{Incentives: Rewards and Penalties}
\label{sec:incentives}
In order to motivate processes to participate in the consensus with the behavior specified by the protocol, an incentive model provides rewards for appropriate behavior and penalties for deviations from the protocol.
The creation of blocks is rewarded by newly created coins.
The block reward also contains a constant amount per included endorsement, shared between the block producer, the endorsers, and the producer of the endorsed block, which motivates block creation and transmission of created blocks as well as endorsement creation and inclusion. 
The inclusion of transactions in blocks is rewarded by the fees from all included transactions. 
In order to prevent block producers from creating or endorsing multiple incompatible blocks in the same slot, we assume that the incentive model gives penalties to addresses involved in this misbehavior.
For instance, in Tezos those penalties are implemented by requiring block and endorsement producers to deposit an amount of coins that they are not allowed to withdraw for a given time~\cite{tezosdoc}.
Any node can produce a denunciation operation containing the proof that the same address has produced or endorsed multiple blocks at the same block slot, see Algorithm \ref{alg:message-handler}.
A denunciation included in a block causes a coin penalty to the offender, taken from its deposit, half of which is destroyed, and half of which is transferred to the block creator.

\subsubsection{Block finalization}
As explained earlier, in an attempt to maintain block production even in the presence of Byzantine processes, the protocol decouples block production from block finalization. While block production relies on a BFT-style algorithm, block finalization remains probabilistic in a Nakamoto-style consensus. Every time a block is appended to the graph $G$, we update the compatibility graph $G_C$. After each update of $G_C$, the total weight (e.e., the fitness) of each maximal clique is subject to change. Recall that \texttt{cliques}$(G_C)$ is the set of maximal cliques of compatible vertexes. 

\begin{figure}
\begin{definition}
Given a current compatibility graph $G_C^\mathrm{head}$ and a new block $b_{h}^{\tau}$, the Nakamoto-style consensus rule outputs a set of final and stale blocks to be removed from $G_C^\mathrm{head}$, and the blockclique to be considered.
\end{definition}
\end{figure}

A block $b_{h}^{\tau}$ is considered stale if it is included only in cliques of $G_C^\mathrm{head}$ that have a total fitness lower than the fitness of the blockclique minus a constant $\Delta_f$. 
Any new block with stale parents is considered stale. 
A block $b_{h}^{\tau}$ is considered final if it is included in all maximal cliques of $G_C^\mathrm{head}$ and included in at least one clique where the descendants of $b_{h}^{\tau}$ accumulate a total fitness greater than $\Delta_f$.

As in other blockchains with Nakamoto consensus, and because finality remains probabilistic, we let the choice of the parameter $\Delta_f$ to the appreciation of the client, depending on the desired security levels. Algorithm \ref{alg:finalizer} implements the block finalization process.

\begin{algorithm}[ht]
\caption{Finalizer}
\label{alg:finalizer}
\footnotesize
\begin{algorithmic}[1]
\state
\State $G_C$ : Compatibility graph
\State $\mathcal{L}$ : The ledger
\\
\function{\textbf{main}$()$}
\State $blockclique \gets blockclique()$ \Comment{Compute the blockclique}
\State $maxCliques \gets maxCliques()$ \Comment{Compute all max cliques}
\State $inAllMaxCliques \gets true$ \Comment{Boolean to check if in all max cliques}
\State $maxFitness \gets 0$ \Comment{Variable to store max fitness}
\For{$block \in G_C$}
    \For{$clique \in maxClique$}
        \If{$block \notin clique$}
            \State $inAllMaxCliques \gets false$
        \EndIf
        \If{$fitness(block, clique) > maxFitness$} \Comment{If the cumulative fitness of the descendants is greater than max fitness}
            \State $maxFitness \gets fitness(block, clique)$ \Comment{Update max fitness}
        \EndIf
    \EndFor
    \If{$(\neg inAllMaxCliques) \land (maxFitness < fitness(blockclique)-\Delta_f)$}\Comment{If block satisfies the stale conditions}
        \State $G.remove(block)$ \Comment{Remove stale block}
        \State $G_C.remove(block)$ 
    \EndIf
    \If{$(maxFitness > \Delta_f) \land inAllMaxCliques$}\Comment{If block satisfies the finality condition}
        \State $EXECUTE(\mathcal{L}, block)$ \Comment{Execute block}
        \State $G.remove(block)$ \Comment{Remove executed block}
        \State $G_C.remove(block)$
    \EndIf
\EndFor
\end{algorithmic}
\end{algorithm}

\section{Conclusion}
Against the current limitation of Blockchain systems in terms of throughput, we propose a protocol that allows for parallel block production. While parallelization is a straightforward solution when it comes to scaling up computing processes, it often generates increased complexity and opens up for new threats. In our case, the threat came in the form of a generalized nothing-at-stake problem that we called a multi-staking attack. We briefly analyzed the attack and proposed an innovative solution combining BFT-style block production and Nakamoto-style finality. This is to our knowledge the first protocol resistant against nothing-at-stake attacks in multi-threaded blockchains. Future works would include proposing a probability model of finalization time.

\bibliographystyle{alpha}
\bibliography{references}

\newpage

\begin{appendices}
\section{Protocol analysis}
\paragraph{Protocol liveness}
We define liveness as the ability of the chain to ever grow. Similarly, we define a liveness attack as the event where an attacker tries to prevent the chain from growing. In order to express the protocol's liveness, we define the liveness parameter as the ratio between the number of blocks an attacker can prevent from being added to the chain and the number of blocks that will be added to the chain by correct processes. 

Obviously, when the attacker is selected as a block producer or when he is drawn more than $E-threshold$ times as an endorser for a given block, he can simply withhold its block or endorsements to prevent the chain from growing. However, the average frequency of those events is known and thus allows us to model the worst-case liveness parameter of the protocol.

Recall that the modeled attacker controls $1/3$ of the stake, knows the topology of the network, is distributed geographically, and always behaves in the optimal way to perform a given attack. It is assumed that the attacker can act in or against its own financial interest for the sole purpose of disrupting the network. 

Given $E = 100$ and $threshold = 67$, such an attacker will be selected as block producer with probability $p_1 = 1/3$ and will have more than $E-threshold$ endorsements for a given block with a probability $p_2 = 1/2$. The combined probability of those events gives us the liveness parameter of the protocol:
\begin{equation}
        liveness = 1 - (p_1 \cup p_2) = 1 - p_1 + p_2 - (p_1 \cap p_2) = 1/3
\end{equation}
In other words, in the worst modeled case, the attacker can prevent two blocks out of three from being added to the chain. 

\paragraph{Forks}
When selected as a block producer, the attacker can fork the chain by selecting as thread parent, a block that is not the most recent from the point of view of the honest. Let us assume that from the point of view of a correct process, block $b_i^\tau$ is the most recent block of thread $\tau$ and that $b_i^\tau$ has received more than the threshold number of endorsements. Let us assume that the attacker is selected as a block producer at slot $s_{i+1}^\tau$. When producing its block $b_{i+1}^\tau$, the attacker can, instead of selecting $b_i^\tau$ as thread parent, select some older block $b_j^\tau$. This fork attack is a specific case of the liveness attack we discussed previously. But there is an additional threat with fork attacks that we are going to develop here. 

If block $b_i^\tau$ has already received a super-majority certificate when the attacker launches its attack, it is possible that, in other threads, honest block producers have already selected $b_i^\tau$ as a non-thread parent for their block. Indeed, the non-thread parent selection rule state that the block producer should select as a non-thread parent, the most recent block of each thread that received a super-majority certificate. As a result, if the attacked branch (the one containing $b_{i+1}^\tau$) is finally adopted by the honest, rather than the branch containing $b_i^\tau$, the number of cliques will increase and the liveness will decrease.

However, our fitness computation model presented in \ref{alg:blockclique} states that, if a super-majority certificate $cert_i$ endorsing the block $b_i^\tau$ of the honest branch can be formed, even though this super-majority certificate has not yet been included in a block, it accounts for the fitness of the honest branch. Indeed, the block that he endorses is a leaf block. Thus correct endorsers, if they have received enough endorsements endorsing $b_i^\tau$ to form a super-majority certificate, will continue endorsing the honest branch rather than the attack branch. If so, the block producer has the incentive to build on the honest branch, as he will be able to double its reward by including two super-majority certificates rather than one.

Figure \ref{fig:fork-attack} presents how our incentive model fosters correct processes to adopt the honest branch of the fork rather than the attackers.

\begin{figure*}
  \centering
  \includegraphics[width=0.8\linewidth]{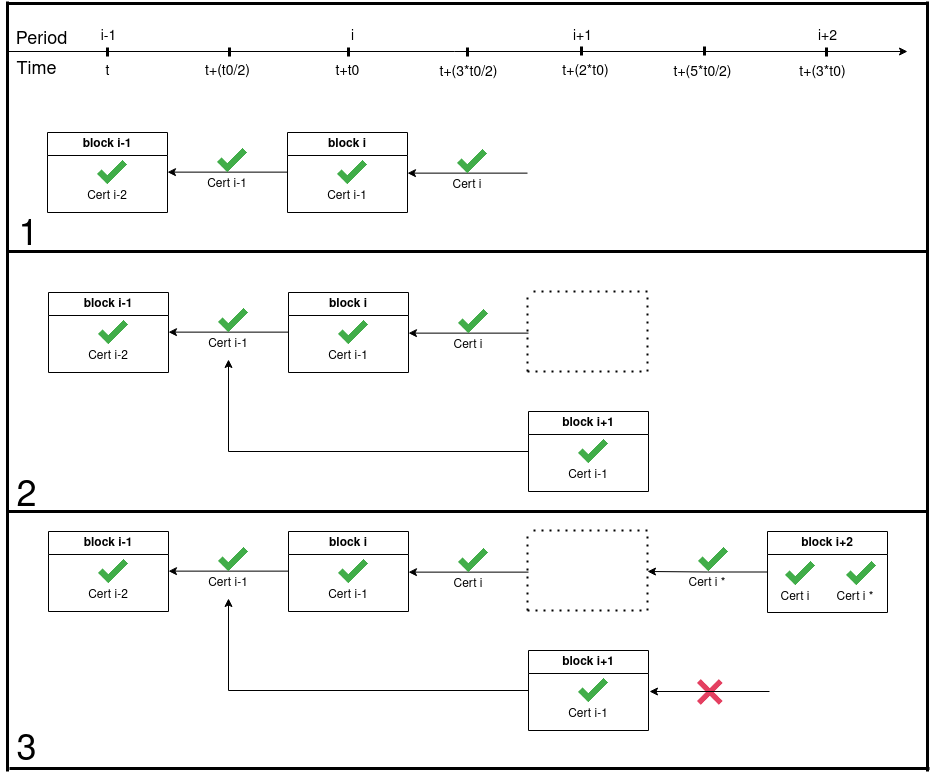}
  \caption{Fork attack
  \\
    \textbf{Frame 1:} Before the beginning of the attack, some block $b_i$ has been appended to the chain. It includes a super-majority certificate $cert_{i-1}$ that endorses his thread parent block $b_{i-1}$. A time $t+(2*t_0/2)$, the block $b_i$ has received enough endorsements to form a super-majority certificate. 
  \\
    \textbf{Frame 2:} The attackers launches its fork attack by broadcasting block $b_{i+1}$. Instead of choosing $b_i$ as a parent, which is the best parent according to protocol rules, the attacker chooses $b_{i-1}$ as a parent and includes $cert_{i-1}$ in its block $b_{i+1}$. This results in a fork with the honest branch containing $b_i$ and the attack branch $b_{i+1}$. Both blocks are incompatible because they have the same parent.
  \\
    \textbf{Frame 3:} From the point of view of an honest endorser, the honest branch of the fork has greater fitness. Indeed, $b_i$ received enough endorsement to form a super-majority certificate, and because $b_i$ is a leaf node, according to the fitness computation rules (See algorithm \ref{alg:blockclique}), $cert_i$ increases the fitness of the honest branch by one. Because of this fitness boost, the blockclique contains the honest branch. Correct and rational endorsers of period $i+1$ will endorse block $b_i$ rather than $b_{i+1}$. If so, a second super-majority certificate endorsing $b_i$ can be produced. If the block producer of period $i+2$ is correct, he will select $b_i$ as parent and include both certificates to double its reward. If the block producer for period $i+2$ is Byzantine, he cannot build upon the attack branch as it has not received a super-majority certificate. 
    }
  \label{fig:fork-attack}
\end{figure*}

\section{Determining the committee and the super-majority}\label{app:super-majority-size}

Let $E$ be the size of the committee (i.e. the number of endorsement slots per block) and $Q$ be the super-majority (i.e. the threshold number of endorsements required for a block to be considered valid). $Q$ is expressed as a portion of $E$ such that $0 \leq Q \leq 1$
We identified two main threats to be considered when choosing values for $Q$ and $E$. One threat to the safety and one to liveness.

\begin{itemize}
    \item \textbf{Safety:} The first threat is the event during which an attacker controls more than $Q*E$ endorsements slots for a block he produces. In this case, the attacker can produce and broadcast arbitrary many versions of a block for the same slot. This would open up various attacks and reorgs opportunities such as multi-stacking, double spending, or attacks on finality.
    
    \item \textbf{Liveness:} The second threat is the case of an attacker controlling more than $(1-Q)*E$ endorsement slots for a block that he does or does not produce. In this case, by not broadcasting the block or endorsements, the attacker could harm the liveness of the protocol.
\end{itemize}

In other words, the larger $E$ is, the safer the protocol, but the more CPU, storage, and network are required. The larger $Q$ is, the safer is the protocol, but the higher the probability of the attacker harming the liveness of the protocol. 

We can see that both those threats are negatively correlated; one is dependent on $Q$ while the other is on $1-Q$. As often in blockchain systems, there is a dilemma between safety and liveness. 

We model the attacker as an entity controlling a portion $\beta$ of the total stake, that always behaves in the optimal way to perform a given attack and that seeks to disturb the functioning of the protocol for its benefit or even at its own cost.

Our stance is that, against such a hypothetical attacker, the protocol should be safe with a very high probability and be live eventually. This choice is not supported by any theoretical proof and anybody could argue the contrary. But it is backed by the idea that in practice, the cost-to-benefit ratio of a liveness attack will discourage any rational attacker. 

Figure \ref{fig:N-T-safety} presents the safety of the protocol given various values of $Q$ and $E$ and given a worst-case attacker that controls a portion $\beta = 1/3$ of the total stake.

\begin{figure*}[!htb]
    \centering
    \includegraphics{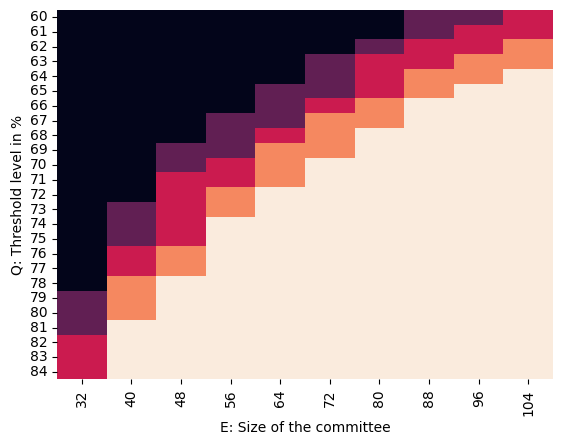}
    \caption{This graph represents the probability of an attacker that controls $1/3$ of the total stake to have more than $Q*E$ endorsement slots for a given block. Each color represents a different security level. In white, the attacker has more than $Q*E$ endorsements for a given slot on average once every $10^4$ years. In orange every $10^3$, etc.
 }
    \label{fig:N-T-safety}
\end{figure*}

\section{Corollaries}
\begin{lemma}[Limit on the number of multi-staked blocks]
As long as the attacker does not reach the threshold in the committee, the number of blocks created at each slot that passes the threshold of endorsements is finite.
\end{lemma}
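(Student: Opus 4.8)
The plan is to prove this by a pigeonhole argument on the fixed endorsement committee of the slot, using two facts: an honest endorser emits at most one endorsement per slot, and an endorsement is cryptographically bound to a single block. First I would fix an arbitrary slot $s$. Since $committee(s)$ is deterministic, there is a single block producer and a fixed set of $E$ endorsement slots attached to $s$; write $a$ for the number of those endorsement slots assigned to Byzantine validators and $h = E - a$ for those assigned to correct ones. The hypothesis that the attacker does not reach the threshold in the committee is exactly the statement $a < \lceil Q\,E \rceil$ (the complement of the safety event analyzed in Appendix \ref{app:super-majority-size}), so $\lceil Q\,E\rceil - a \geq 1$.

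Next I would unfold the meaning of ``passes the threshold of endorsements'': a block of slot $s$ passes the threshold iff a valid super-majority certificate exists for it, i.e.\ a set of at least $\lceil Q\,E\rceil$ endorsements with pairwise-distinct indices, all for slot $s$ and all carrying that block's hash as $endorsedBlock$ --- precisely the checks of the certificate validity primitive in Algorithm \ref{alg:validity-primitives}. Since at most $a$ of these can come from Byzantine endorsement slots, any threshold-passing block of slot $s$ must be endorsed by at least $\lceil Q\,E\rceil - a \geq 1$ distinct \emph{honest} endorsers.

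The crux is then to show that two distinct threshold-passing blocks of slot $s$ cannot share an honest endorsement. By the first-in propagation rule of Algorithm \ref{alg:message-handler} together with the endorsement validity primitive, a correct validator drawn for a pair $(slot,index)$ signs and broadcasts exactly one endorsement message; and under the ideal-cryptography model that signed endorsement authenticates one specific value of $endorsedBlock$ (hashes have no collisions and signatures cannot be forged), so it cannot be replayed onto a different block hash. Hence the honest endorsements of slot $s$ are partitioned among the blocks they endorse, and the honest-endorsement sets of two distinct threshold-passing blocks are disjoint. Combining this with the previous step, if $N$ is the number of blocks of slot $s$ that pass the threshold, then $N\,(\lceil Q\,E\rceil - a) \le h \le E$, so $N \le E/(\lceil Q\,E\rceil - a) \le E$, which is finite; since $s$ was arbitrary, the claim follows slot by slot.

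The main obstacle I expect is making the ``no sharing of honest endorsements'' step airtight: one must carefully invoke the ideal-cryptography assumption and the first-in rule to rule out an honest endorser's signature being weaponized across several block versions, and also check that a Byzantine endorser re-using its own endorsements on many versions still cannot push any version past the threshold without borrowing honest endorsements --- i.e.\ that the Byzantine contribution to every certificate is bounded by $a$. Once that is pinned down, the rest is a one-line counting bound.
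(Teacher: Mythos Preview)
Your proposal is correct and follows essentially the same approach as the paper: a pigeonhole count on the honest endorsements of the slot. The paper's proof lets the attacker hold $Q-k$ endorsement slots with $k>0$, observes that honest endorsers sign only once per slot, and concludes with an explicit bound $\lfloor (n+2k)/k\rfloor$; your version is the same counting argument spelled out more carefully (fixing the committee, bounding the Byzantine contribution to any certificate by $a$, invoking disjointness of honest endorsements via the ideal-cryptography model, and dividing). One small remark: the fact that a correct endorser emits at most one endorsement per $(slot,index)$ is more directly read off Algorithm~\ref{alg:endorser} (the \texttt{endorsed} guard) than from the first-in propagation rule of Algorithm~\ref{alg:message-handler}, but this does not affect the argument.
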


\begin{proof}
Let $Q$ be the endorsement threshold. By hypothesis, the attacker can only produce $Q-k$ with $k > 0$ endorsements. Since honest nodes produce only a single endorsement per slot, it follows that the number of blocks that can reach the endorsement threshold is $\lfloor\frac{n+2k}{k}\rfloor$.
\end{proof}

\begin{lemma}
An attacker cannot construct an alternative chain with higher fitness than he later discloses.
\end{lemma}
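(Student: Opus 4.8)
The plan is to read the statement as a no-private-mining claim: if the attacker withholds an alternative chain and discloses it at some later slot $t_{\mathrm{disc}}$, its fitness cannot exceed that of the blockclique held by honest processes at $t_{\mathrm{disc}}$, so that by the consensus rule of Algorithm~\ref{alg:blockclique} the disclosed chain is never adopted. The structural fact to exploit is that fitness has exactly two sources, blocks and super-majority certificates, and that --- by the hypothesis underlying the preceding lemma, quantified in Appendix~\ref{app:super-majority-size} --- the attacker acting alone cannot assemble a certificate: each one needs $Q\cdot E$ endorsements, a super-majority of which must come from honest endorsers. By Algorithm~\ref{alg:endorser} honest endorsers only ever endorse the tip of the blockclique they see, and by the first-in rule of Algorithm~\ref{alg:message-handler} they never see, hence never endorse, a block the attacker has withheld. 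Therefore no certificate can ever be formed for a block on the withheld part of the alternative chain; and since a block is valid only if its thread-parent already carries a certificate, the attacker cannot even append a block to the withheld chain unless its thread-parent is a public, already-certified block --- which caps the withheld chain's post-fork growth very tightly.

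With this in hand I would let $t_{\mathrm{fork}}$ be the last slot at which the alternative chain still coincides with the public blockclique, so the two have equal fitness there, and compare growth over $[t_{\mathrm{fork}}, t_{\mathrm{disc}}]$. Writing $n$ for the number of block slots in that window, the withheld chain gains at most one vertex per slot whose drawn producer is the attacker and no certificates at all, i.e.\ at most $\approx\beta n$ vertices with $\beta=1/3$ (in fact far fewer, by the previous paragraph). The public blockclique, on the other hand, gains one block at every honest producer's slot and, with the probability complementary to the liveness threat of Appendix~\ref{app:super-majority-size} --- namely that the attacker does not control enough of that block's endorsement slots to keep it below the threshold --- a super-majority certificate for it as well, which the next block includes, contributing a second unit of fitness. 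Hence the public chain gains of order $2(1-\beta)n$ against the withheld chain's $\le\beta n$; since $2(1-\beta)>\beta$ for every $\beta<2/3$, the public blockclique is strictly heavier at $t_{\mathrm{disc}}$, which is the claim.

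The step I expect to be the main obstacle is the cross-thread bookkeeping concealed in the phrase ``fitness of an alternative chain'': strictly this is the total fitness of the maximal clique containing the chain, and I must rule out the attacker parasitically inheriting fitness from blocks or speculative certificates of \emph{other} threads that happen to stay compatible with his forked branch. The lever for this is that once the branch genuinely diverges in some thread $\tau$, every honestly produced thread-$\tau$ block after $t_{\mathrm{fork}}$ is incompatible with it, and --- as recalled in Section~\ref{multithreaded} --- these incompatibilities propagate to all descendants across threads, so the honest post-fork activity stops feeding the attacker's clique while continuing to feed the public one; the certificate-fitness mechanism of Algorithm~\ref{alg:blockclique}, used exactly as in Figure~\ref{fig:fork-attack}, makes this concrete. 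A secondary subtlety, which I would fold into the failure probability already bounded in Appendix~\ref{app:super-majority-size}, is that ``cannot'' must be read in the v.h.p.\ sense: there is a negligible chance (e.g.\ $p\approx 5.3\times10^{-12}$ for $E=96$ and $Q\cdot E=64$) that the attacker momentarily clears the threshold on his own and certifies withheld blocks, and this event is simply excluded.
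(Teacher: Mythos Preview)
Your proposal is correct and rests on the same key observation the paper uses: by the committee and super-majority parameters of Appendix~\ref{app:super-majority-size}, the attacker cannot reach the endorsement threshold on his own, and since every valid block must carry a super-majority certificate for its thread-parent, no privately withheld chain can accumulate fitness that overtakes the public blockclique. The paper's own proof, however, is a single sentence --- it asserts that the lemma ``follows immediately'' from the parameter choice and stops there; your growth-rate comparison over $[t_{\mathrm{fork}}, t_{\mathrm{disc}}]$, the cross-thread compatibility bookkeeping, and the explicit v.h.p.\ caveat are all additional rigor that the paper does not supply, so you are filling in details the authors left implicit rather than taking a different route.
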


\begin{proof}
It follows immediately from the parameters chosen for the size of the committee and of the super-majority: an attacker cannot reach the endorsement threshold by himself and thus cannot create an alternative chain that overtakes the blockclique.
\end{proof}
\section{Endorser and Block producer Algorithms}

\begin{algorithm*}[!htb]
\caption{Endorser instance for Slot: s and PoSIndex: index}
\label{alg:endorser}
\footnotesize
\begin{algorithmic}[1]
\state
\State $Endorsement$ : 2D vector of endorsements $Endorsements[slot][index]
\rightarrow Endorsement$ \Comment{2D vector because for each block slot, there are $E$ endorsement slots}
\State $endorsed$ : Vector of slots for which endorsement has already been broadcast
\\
\function{\textbf{main}$(s, index)$}
\If{$(s, index) \notin endorsed$} 
    \State $blockClique \gets blockClique()$ \Comment{Compute blockclique}
    \State $blockToEndorse \gets$ hash of last block $\in blockClique$ of thread $s.thread$ \Comment{Select best thread parent}
    \State $e \gets Endorsement(s, index, blockToEndorse)$ \Comment{Instantiate endorsement}
    \broadcast[ENDORSEMENT, $e$] \Comment{Broadcast endorsement}
    \State $endorsed.append((s, index))$
\EndIf
\end{algorithmic}
\end{algorithm*}

\begin{algorithm*}[!htb]
\caption{Block producer instance for Slot: s}
\label{alg:block-producer}
\footnotesize
\begin{algorithmic}[1]
\state
\State $Endorsements$ : 2D vector of endorsements $Endorsements[slot][index]
\rightarrow Endorsement$
\State $blockDenunciations$ : Vector of Byzantine blocks
\State $endorsementDenunciations$ : Vector of Byzantine endorsements
\\
\function{\textbf{main}$(s)$}
\State $blockClique \gets blockClique()$ \Comment{Compute blockclique}
\State $parents \gets$ each last block of each thread $\in blockClique$ that received a super-majority\Comment{\parbox[t]{.2\linewidth}{Select best parent}}
\State $operations \gets$ select thread operations from the pool (including denunciations)
\State $certs \gets [\;]$
\For{$endorsementSlot \in Endorsements$} \Comment{For each endorsement slot}
    \If{$endorsementSlot.thread = s.thread$} \Comment{If slot of current thread}
        \State $certificate \gets buildCert(endorsementSlot)$ \Comment{Try to build certificate}
        \If{$certificate \land certificate.endorsedBlock = parents[s.thread]$}
             \State $certs.append(certificate)$ \Comment{Append certificate}
        \EndIf
    \EndIf
\EndFor
\State $b \gets (s, parents, certs, operations)$ \Comment{Instantiate block}
\broadcast[BLOCK, $b$] \Comment{Broadcast block}
\end{algorithmic}
\end{algorithm*}

\section{Utilities full description}
Algorithm \ref{alg:utilities-full} implements the Interface presented in \ref{alg:utilities-interface}.
\begin{algorithm*}[!htb]
\caption{Utilities full implementation}
\label{alg:utilities-full}
\footnotesize
\begin{algorithmic}[1]
\state
\State $G$ : Block graph
\State $G_C$ : compatibility graph
\State $time$ : Number of milliseconds since the genesis block of thread 0
\State $pubKey$ : This process public key
\State $threshold$ : Threshold number of endorsements to form a super-majority certificate
\\
\function{$timeToSlot(time)$} \Comment{Transforms time into slot}
    \State \Return $slot$
    
\\
\function{$buildCert(endorsements)$} \Comment{Tries to build a super-majority certificate out of an endorsement list}
\State $cert \gets Certificate()$ \Comment{Instantiate certificate}
\State $hash \gets $ most duplicated key $endorsedBlock \in endorsements$
\State $cert.endorsement.append(\forall e \in endorsements$ where $e.endorsedBlock = hash$)
\If{$\lvert cert.endorsements \rvert \geq threshold$} \Comment{If threshold reached}
    \State $cert.endorsedBlock \gets hash$ \Comment{Set cert. parameters}
    \State $cert.slot \gets $ slot of the included endorsements
    \State \Return $cert$ \Comment{Return certificate}
\Else
    \State \Return false \Comment{Return false if certificate cannot be built}
\EndIf

\\
\function{$committeeBlocProducer(Slot: s)$} \Comment{PoS draw block producer}
    \State \Return $blockProducer$ \Comment{Returns the block producer public key}
    
\\
\function{$committeeEndorsers(Slot: s)$} \Comment{PoS draw endorsers given slot and ledger state}
    \State \Return $endorsers$ \Comment{Returns a vector of objects $(pubKey, index)$}
\\

\function{$parentsOlderThanBlock(Block: b)$}
\For{$parent \in b.parents$}
    \State \textbf{require} $b.slot > G[parent].slot$
\EndFor

\\
\function{$parentsMutuallyCompatible(Block: b)$} \Comment{Checks if a block's parents are all mutually compatible}
\For{$someParent \in b.parents$}\Comment{For each parent}
    \For{$someOtherParent \in b.parents$}\Comment{For each other parent}
        \State \textbf{require} $G_C.getEdge(someParent, someOtherParent)$ \Comment{Verify that they are mutually compatible}
    \EndFor
\EndFor
\State \Return $true$

\\
\function{$grandParentOlderThanParent(Block: b)$}
\For{$parent \in b.parents$} \Comment{For each parent}
    \For{$grandParent \in G[parent].parents$} \Comment{For each grand parent}
        \State \textbf{require} $parent.slot \geq grandParent.slot$ \Comment{Verify that parent is more or equally recent than grand parent}
    \EndFor
\EndFor
\State \Return $true$

\\
\function{$certificatesEndorseRightParent(Block: b)$}
\For{$cert \in b.certificates$} \Comment{For each included certificate}
    \State \textbf{require} $isValid(cert)$ \Comment{Verify that certificate is valid}
    \State \textbf{require} $cert.endorsedBlock = b.parents[b.slot.thread].hash$ \Comment{Verify that each certificate endorses thread parent}
\EndFor
\State \textbf{require} $\exists cert \in b.certificates$ where $cert.slot =  b.parents[b.slot.thread].slot$  \Comment{\parbox[t]{.3\linewidth}{Verify that at least one $cert$ was produced during same period as the block $b$ thread parent}}
\State \Return $true$
\end{algorithmic}
\end{algorithm*}

\end{appendices}
\end{document}